\newcommand{\etal}{\textit{et al}. }
\newcommand{\ie}{\textit{i}.\textit{e}., }
\newcommand{\eg}{\textit{e}.\textit{g}., }
\newcommand{\eqnref}[1]{Eq.~(\ref{#1})}  %
\newcommand{\tabincell}[2]{\begin{tabular}{@{}#1@{}}#2\end{tabular}}  %
\def\thanks#1{\protected@xdef\@thanks{\@thanks
        \protect\footnotetext{#1}}}
\begin{document}
\title{RecAGT: Shard Testable Codes with Adaptive Group Testing for Malicious Nodes Identification in Sharding Permissioned Blockchain}
\titlerunning{RecAGT: Shard Testable Codes with Adaptive Group Testing}
\author{Dong-Yang Yu\inst{1}\orcidlink{0009-0003-3830-9388}
\and Jin Wang\inst{1,2}$^{(\textrm{\Letter})}$\orcidlink{0000-0003-0766-9906}
\and Lingzhi Li\inst{1}$^{(\textrm{\Letter})}$\orcidlink{0000-0003-3336-2369}
\and Wei Jiang\inst{1}
\and Can Liu\inst{1}
}
\authorrunning{D.Y. Yu et al.}
\institute{
School of Computer Science and Technology, Soochow University, Suzhou, China 
\and School of Future Science and Engineering, Soochow University, Suzhou, China
\email{dyyu@stu.suda.edu.cn}\\
}
\thanks{Corresponding Authors: Jin Wang and Lingzhi Li}
\maketitle %

\begin{abstract}
Recently, permissioned blockchain has been extensively explored in various fields, such as asset management, supply chain, healthcare, and many others. Many scholars are dedicated to improving its verifiability, scalability, and performance based on sharding techniques, including grouping nodes and handling cross-shard transactions.
However, they ignore the node vulnerability problem, \ie there is no guarantee that nodes will not be maliciously controlled throughout their life cycle.
Facing this challenge, we propose RecAGT, a novel identification scheme aimed at reducing communication overhead and identifying potential malicious nodes. 
First, shard testable codes are designed to encode the original data in case of a leak of confidential data. Second, a new identity proof protocol is presented as evidence against malicious behavior. Finally, adaptive group testing is chosen to identify malicious nodes.
Notably, our work focuses on the internal operation within the committee and can thus be applied to any sharding permissioned blockchains. 
Simulation results show that our proposed scheme can effectively identify malicious nodes with low communication and computational costs.
\keywords{Permissioned blockchain \and Sharding \and Coded computation \and Group testing}
\end{abstract}

\section{Introduction}
Permissioned blockchain has emerged as an appropriate architecture concept for business environments, and it is presently arising as a promising solution for distributed cross-enterprise applications.
However, it still faces many challenges regarding verifiability\cite{rebello2022security,amiri2021separ}, scalability\cite{amiri2021permissioned}, and performance\cite{gorenflo2020fastfabric}. To solve these problems, the sharding technique inspired by Spanner\cite{corbett2013spanner} is proposed to be integrated with permissioned blockchain, which partitions block data into multiple shards that are maintained by different committees (or ``clusters'').

Existing work\cite{amiri2019sharding,amiri2021sharper,dang2019towards,huang2022elastic,gao2022pshard,mao2023geochain,zheng2022meepo} in sharding permissioned blockchains focuses on how to partition nodes into different committees and efficiently handle cross-shard transactions. But they ignore \textbf{the vulnerability of nodes to malicious control}. 
There is no guarantee that nodes could remain honest\footnote{Honest nodes are those that perform normally following the rules of the system (\eg read, write or maintain blocks and perform or relay transactions).}.
In other words, nodes cannot always behave normally throughout their life cycle.
For example, nodes could come under control and turn malicious\footnote{The behaviors of malicious (or Byzantine) nodes could censor, reverse, reorder or withhold specific transactions without including them in any block to interfere with the system\cite{falazi2020transactional}.} as a result of cyber attacks such as BGP hijacking\cite{ekparinya2019attack}, DNS attack\cite{saad2020exploring}, or Eclipse attack\cite{davenport2018attack}.
Many of the previous research\cite{yu2019lagrange,solanki2019non,hong2022byzantine,hong2022hierarchical} on malicious node identification has been explored in distributed computing. The common idea of them is to utilize different coding algorithms to check the final computation output and use numerous testing trials to find malicious nodes.
However, workers in distributed systems perform intermediate computing tasks and do not maintain any data locally. Moreover, there must be complete trust between the master and workers\cite{zhao2020blockchain}.

In this paper, we consider the node vulnerability problem in sharding permissioned blockchains. We propose the sha\textbf{R}d t\textbf{e}stable \textbf{c}ode with \textbf{A}daptive \textbf{G}roup \textbf{T}esting (\textbf{RecAGT}), a malicious node identification scheme. Specifically, we first present our shard testable codes by designing polynomial functions to reduce communication overhead. Nodes perform verification based on the properties of testable codes. Then we design an identity proof protocol based on the digital signature as the proof of malicious behaviors. Finally, we use an adaptive group testing algorithm to calculate the required number of test trials. Therefore, the newly-joined node can verify the received messages and recover the original data to improve its ability to identify malicious nodes, which further enhances the security and stability of the sharding permissioned blockchain.

The main contributions of this paper are summarized as follows:
\begin{itemize}
  \item We propose a new scheme called RecAGT for malicious node identification. It is shown that communication costs and computational complexity will be significantly reduced from $O(n^{2}b)$ to $O(\log^2(m)\log\log(m))$ compared to other schemes (Table \ref{table:complexity}).
  \item In addition, the administrator could perform adaptive group testing to reduce the number of tests required to identify malicious nodes.
  \item We conduct theoretical simulations and the results show that our scheme only needs a low number of group tests, which effectively improves the system security and stability.  
\end{itemize}

\begin{table}[htbp]
    \renewcommand\arraystretch{1.1}  %
    \tabcolsep=1em   %
    \centering
    \begin{threeparttable}
    \caption{The communication cost and computational complexity of nodes\tnote{1}}
    \begin{tabular}{ccc}
        \toprule
        \textbf{} & \textbf{Communication cost} & \textbf{Computational complexity} \\
        \midrule
        \textbf{Uncoded}  & $O(nb)$   & $O(n^{2}b)$   \\
        \textbf{CheckSum}  & $O(b)$   & $O(n^{2})$   \\
        \textbf{RecAGT}  & $O(b)$  &  $O(\log^2(m)\log\log(m))$  \\
        \bottomrule
    \end{tabular}
    \label{table:complexity}

    \begin{tablenotes}
    \footnotesize
    \item[1] $b$: shard size in bytes, $n$: size of committee, $m$: size of coding shard
    \end{tablenotes}
    \end{threeparttable}
\end{table}

The rest of the paper is organized as follows. Section \ref{section:RelatedWork} discusses related work. Section \ref{section:SystemModel} describes the setup of the permissioned blockchain and introduces the system model. In Section \ref{section:method}, we propose our identification scheme and give detailed theoretical analyses. Section \ref{section:experiments} analyzes and discusses the experimental results. Finally, Section \ref{section:conclusion} concludes the paper.
\section{Related Work}
\label{section:RelatedWork}
Recently, a lot of research has been made on the permissioned blockchain to improve its verifiability, scalability, and performance. Since our scheme is proposed based on the sharding technique, we will discuss the related work in the aspects of sharding and other methods.

\textbf{Sharding methods.}
There have been many studies working on sharding permissioned blockchains. Amiri \etal \cite{amiri2019sharding} introduce a model to handle intra-shard transactions and their subsequent work\cite{amiri2021sharper} uses a directed acyclic graph to resolve cross-shard transaction agreements to improve verifiability. Dang \etal\cite{dang2019towards} design a comprehensive protocol including shard formation and transaction handling to upgrade performance.
Huang \etal \cite{huang2022elastic} propose an adaptive resource allocation algorithm to efficiently allocate network resources for system stability. 
Gao \etal \cite{gao2022pshard} propose the Pshard protocol, which adopts a two-layer data model and uses a two-phase method to execute cross-shard transactions to ensure safety and liveness. 
Mao \etal \cite{mao2023geochain} propose a locality-based sharding protocol in which they cluster participants based on their geographical properties to optimize inter-shard performance. 
As we can see, they pay more attention to the operation of blockchain systems. Nonetheless, these approaches overlook the potential actions of individual nodes. In our research, we consider the scenario where nodes might be under malicious control and propose an identification scheme to mitigate these vulnerabilities.

\textbf{Identification of malicious nodes.}
The problem of malicious node identification has been studied in distributed systems. 
Yu \etal \cite{yu2019lagrange} provide resiliency against stragglers and security against Byzantine attacks based on Lagrange codes. 
Solanki \etal \cite{solanki2019non} design a coding scheme to identify attackers in distributed computing.
Hong \etal \cite{hong2022byzantine} propose locally testable codes to identify Byzantine attackers in distributed matrix multiplication. 
They also propose a hierarchical group testing\cite{hong2022hierarchical} in distributed matrix multiplication, making the required number of tests smaller. However, there must be complete trust between the master and workers, which is unsuitable for blockchain. In this paper, we develop an identity proof method that serves as a safeguard against potential malicious behavior.

In view of these unresolved problems, we propose a novel identification scheme named RecAGT, specifically designed to address the issue of identifying malicious nodes effectively.

\begin{remark}
Our work focuses on reducing communication costs and identifying potential malicious nodes based on their transmitted messages. Therefore, we do not investigate further details about transaction verification and subsequent penalty actions.
\end{remark}
\section{System Overview} 
\label{section:SystemModel}
In this section, we introduce the system model based on the permissioned\footnote{Generally speaking, permissioned blockchains can be divided into private and consortium blockchains since both of them only allow nodes with identity to join the network. Our study primarily focuses on consortium blockchains due to their alignment with the idea of decentralization.} blockchain and explore potential corresponding attacks. 

\begin{figure}[H]
    \centering
    \includegraphics[width=1.0\textwidth]{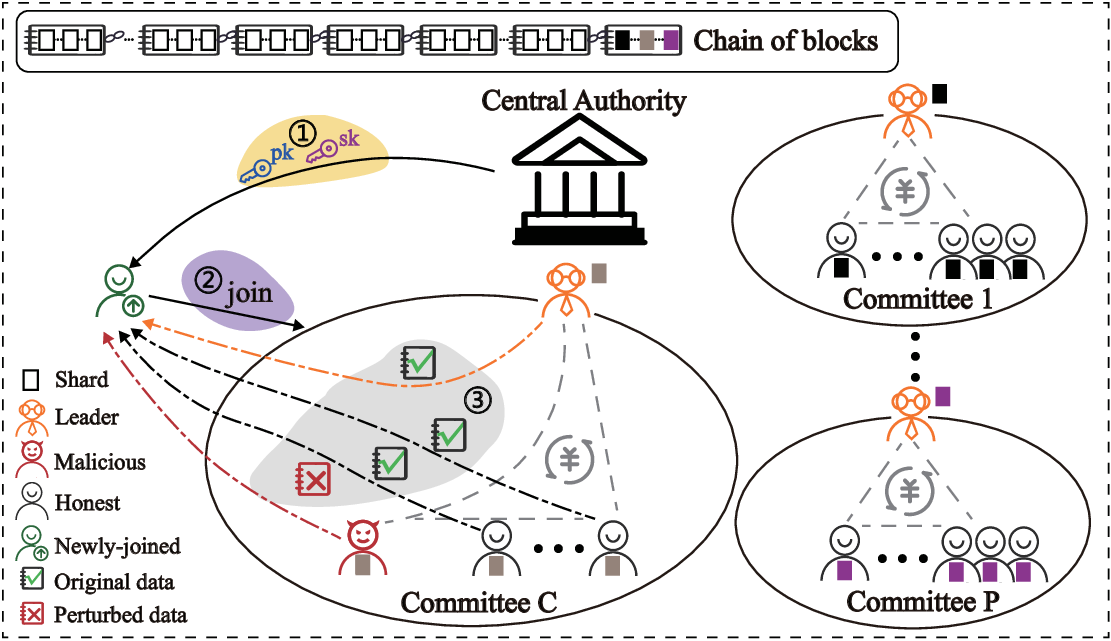}
    \vspace{-2.0em}
    \caption{System model of sharding permissioned blockchain under the existence of malicious nodes}
    \label{fig:system_model}
\end{figure}

\subsection{System Model}
The permissioned blockchain is a distributed ledger that cannot be publicly accessed and is only open to users with authorized digital certificates. Nodes perform specific operations granted by the administrator (\textit{Central Authority, CA}). Without losing generality, we assume that there is a \textit{public key infrastructure} (\textit{PKI}) in the system, that is, \textit{CA} distributes the private (secret) key $sk_{i}$ and the public key $pk_{i}$ to each node $\rm{N}$$_{i}$ as identity credentials. Note that each node knows each other's public key via \textit{CA}. In addition, \textit{CA} needs to assign a unique scalar $x_{i}$ to each member for the construction of our identification scheme.

In sharding permissioned blockchains, nodes are partitioned into committees. Essentially, the processing mechanism is the same for each committee since each can be seen as a tiny blockchain system that maintains a subchain. Hence, we will focus on a committee $\mathcal{C}$ in the following sections.
There are basically two types of nodes: full nodes and light nodes. Full nodes are able to generate and store valid blocks and verify new blocks from other full nodes. Light nodes are able to perform transaction inclusion verification and thus increase blockchain scalability instead of storing the entire ledger. Based on the purpose of reducing communication costs and identifying malicious nodes, the object of our study is the full node.

The overall system model is illustrated in Fig.~\ref{fig:system_model}. Each member would maintain the entire shards of its corresponding committee under a permissioned blockchain. But for simplicity of presentation, we only show the shard of the latest block for each node. Each committee handles different transactions in parallel. When a new node enters the network, \textit{CA} will verify its identity and assign keys ($sk$ and $pk$) to it (shown in yellow background). Then the new node joins one of the committees by partitioning rules (shown in purple background), and it needs to retrieve all the shards of the committee in order to participate in the transaction processing. However, the node cannot fully trust any members, even the leader, at all. Therefore, it must receive data messages from as many members as possible against interference from potential malicious nodes (shown in grey background).
 
The data stored by nodes is a chain of sub-blocks, denoted as $B$, which is actually a set of byte strings of blocks containing a batch of transactions. If we divide them into $m$ subshards, they can be shown as
$$
B = \begin{bmatrix}
B_{1} & B_{2} & \cdots & B_{m}
\end{bmatrix}^{\top}.
$$

The system has $P$ committees and each committee $\mathcal{C}$ includes $n$ members.
Other assumptions of our network model are as follows:
\begin{enumerate}
  \item Credibility of nodes: we assume only \textit{CA} is honest and has no assumptions about other nodes. It's a weak decentralization. In practice, permissioned blockchains do not defy the principles of decentralization but rather strike a balance between centralized and decentralized requirements.
  \item Finite maximum network delay: all requests can be answered in a finite time. In other words, a finite maximum delay $\delta$ is assumed. If a node sends a request, it will receive a response message within $\delta$. Otherwise, we assume that the target node is offline or does not work.
  \item Network communication:
  \begin{itemize}
    \item Nodes communicate with other nodes;
    \item Point-to-point communications are asynchronous;
    \item The communication channel is noiseless.
  \end{itemize}
\end{enumerate}

\subsection{Threat Model}
When a new node enters the system, its identity undergoes verification by \textit{CA}. Once the node's identity is authenticated, \textit{CA} allocates identity credentials to the new node. However, it is possible that committee members behave maliciously to prevent new nodes from joining the system (\eg they may not send feedback or they may respond to perturbed shard data). In other words, although nodes are authenticated when they join the network, they cannot be fully trusted because there is no guarantee that they will not be maliciously controlled throughout their life cycle. Therefore, it is imperative to devise an efficient identification scheme to identify potential malicious nodes.
\begin{figure}[htb]
    \centering
    \includegraphics[width=\textwidth]{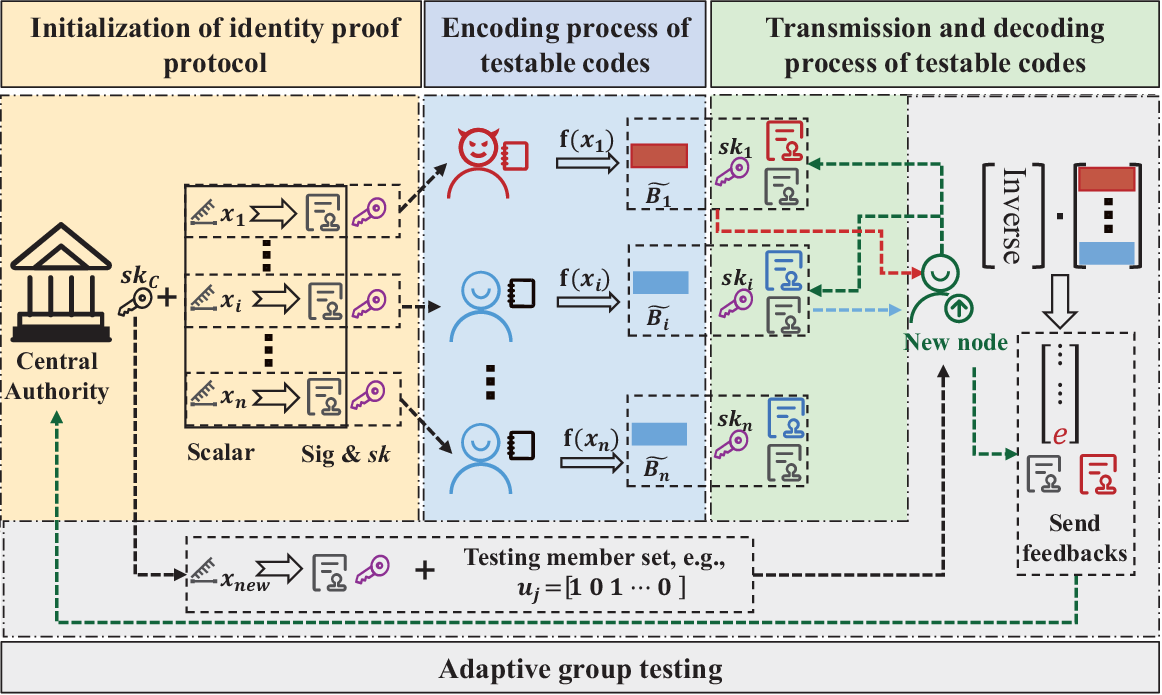}
    \vspace{-2.0em}
    \caption{Overview of the RecAGT scheme. When a new node $\rm{N}$$_{new}$ joins the system, \textit{CA} utilizes \textit{PKI} to allocate public and private keys, and a scalar along with a testing set to it. \textit{CA} signs the scalar of $\rm{N}$$_{new}$ using its own private key and transmits it to $\rm{N}$$_{new}$. This allows $\rm{N}$$_{new}$ to verify and store the received scalar using the \textit{CA}'s public key. Then, $\rm{N}$$_{new}$ sends requests to nodes of the testing set. $\rm{N}$$_{new}$ generates the historical data of its committee by receiving messages containing encoded shard data with the first and second signatures. If the computing result is incorrect during the verification process, $\rm{N}$$_{new}$ will forward feedback to \textit{CA} based on these messages, aiding in identifying potential malicious nodes.}
    \label{fig:plan_overview}
\end{figure}

\section{RecAGT Identification Scheme}
\label{section:method}
Building upon the system model outlined in Section \ref{section:SystemModel}, we propose an identification scheme based on our shard testable codes. In this scheme, newly-joined nodes are provided with encoded shard data to facilitate the retrieval of the original data specific to the committee they are part of. If an inconsistency arises during the verification process, the node will send feedback to \textit{CA} to help identify potential malicious nodes, as illustrated in Fig.~\ref{fig:plan_overview}. The scheme consists of three key components: shard testable codes, an identity proof protocol, and an adaptive group testing method.

A straightforward data identification scheme, referred to as ``\textbf{Uncoded}'', might involve each committee member sending the original data to the newly-joined node, enabling its participation in blockchain activities.
However, malicious nodes could disrupt the normal operation (\eg sending fraudulent transactions). Therefore, the node has to compare all received data to ensure that there is no perturbed data, which would be equivalent to receiving the entire shard. If MD5 is adopted to perform the data consistency check, then the computational complexity for newly-joined nodes will be $O(n^{2}b)$ which is time-consuming and increases the communication overhead.

\textbf{Straw man method: ``\textbf{Checksums}''.}
There is no need to ask each member to send original data. Intuitively, we could replace those original data messages with checksums. Checksums (such as SHA-256) are used to check the integrity of web content so that it can determine if a document has changed in case of a tampering attack. Specifically, when a new node joins the network, only one of the members needs to send the original data. What the remaining members need to send is the digest generated by the checksums method from the shard they store locally. Since the newly-joined node does not need to generate the digest itself, the computational complexity is $O(n^{2})$. 

The above methods are not efficient on account that they require pairwise comparisons for each message. Moreover, they are vulnerable when it comes to privacy-preserving since both send at least one original data to an unidentified node. However, in a permissioned blockchain, the internal data of each organization should be kept confidential meanwhile cross-enterprise transactions should be transparent to all parties. Therefore, in addition to identifying potential malicious nodes, we need to encode data during transmission.

\subsection{Shard Testable Codes}
To check if the returned data is not perturbed in an efficient way, we first design a method to encode the data that is inspired by polynomial codes\cite{yu2017polynomial}. The linear encoding function $\mathbf{f}$ is constructed by using the row-divided sub-matrices $B_{v}$ and the corresponding scalars $x_{i}$ as coefficients, which is given by \begin{equation}
\label{f_A}  %
\mathbf{f}(x_{i})=\sum_{v=0}^{m-1} B_{v+1}^T x_i^v = \widetilde{B}_{i}.
\end{equation}
Accordingly, each member will store the encoded shard (\eg Node ${\rm{N}}_i$ will store $\widetilde{B}_{i}$ locally).

By combining encoding functions for a committee of size $n$, we represent the coded shards as a matrix, which is given by
\begin{equation}
\label{encoding_matrix}  %
\widetilde{B}= G\cdot B = \begin{bmatrix}
x_{1}^{0} & \cdots & x_{1}^{m-1} \\
\vdots & \ddots  & \vdots \\
x_{n}^{0} & \cdots & x_{n}^{m-1} \\
\end{bmatrix} \cdot \begin{bmatrix}
B_{1} \\ \vdots \\ B_{m}
\end{bmatrix} = \begin{bmatrix}
\widetilde{B}_{1} \\ \vdots\\ \widetilde{B}_{n}
\end{bmatrix},
\end{equation}
where $G$ is the encoding matrix consisting of all scalars assigned to each committee member.

Waiting for $m+1$ encoded shard messages from different nodes to form a test group denoted $\mathbf{u}$ (we will discuss the test group in Section \ref{section:adaptive_group_testing}), the newly-joined node collects these messages into a new matrix $C$. For simplicity of demonstration, we assume that the nodes' indexes that return messages are [$1$: $m+1$], which means that their scalars are $[x_{i}]_{i=1}^{m+1}$. Therefore, the result vector consisting of intermediate messages can be expressed as
\begin{equation}
 \begin{aligned}
 \centering
 \label{eqn_c}
 C&=G_{u} \cdot B 
 =\left[\begin{array}{ccc}
  x_1^0 & \cdots & x_1^{m-1} \\
  \vdots & \ddots & \vdots \\
  x_{m+1}^0 & \cdots & x_{m+1}^{m-1}
 \end{array}\right] \cdot B,\\
\end{aligned}
\end{equation}
where
\begin{equation}
 \begin{aligned}
 \centering
  C_{i}&= \sum_{k=0}^{m-1}{x_{i}^{k}B_{k+1}},\, \forall i \in [1,\,m+1],
  \end{aligned}
\end{equation}
and $G_{u}$ is the encoding matrix consisting of the $m+1$ encoding vectors.

Based on $G_{u}$, we can make Vandermonde matrix $V$ by adding one more column shown in bold, which is given by
\begin{equation}
\label{V}
V= \begin{bmatrix}
x_{1}^{0} & \cdots & x_{1}^{m-1} & \boldsymbol{x_{1}^{m}} \\
\vdots & \ddots  & \vdots & \boldsymbol{\vdots} \\
x_{m+1}^{0} & \cdots & x_{m+1}^{m-1} & \boldsymbol{x_{m+1}^{m}}\\
\end{bmatrix}.
\end{equation}

Since the Vandermonde matrix is invertible, we denote the inverse of $V$ as $S$. The inverse of the Vandermonde matrix can be computed as
\begin{equation}
    S = \prod_{1 \leq j<i \leq m+1} \frac{1}{x_i-x_j}.
\end{equation}
Since the computing process can be viewed as a polynomial interpolation problem\cite{verde1988inverses}. We adopt one of the  method\cite{kedlaya2011fast} whose computational complexity is $O(\log^2(m)\log\log(m))$. Note that the complexity could be further reduced by adopting any variant of interpolation algorithms.

Without loss of generality, let's define matrix $U$ as the first $m$ columns of $V$, which is another name for $G_{u}$. Therefore, $C=U\cdot B=G_{u}\cdot B$, and by the construction and rules for matrix multiplication, the following equation holds true,
\begin{equation}
\label{eq_SU}
S\cdot U= \left[\frac{\mathbf{I}_{m}}{\mathbf{0}_{1, m}}\right],
\end{equation}
where $\mathbf{I}_{m}$ denotes the identity matrix of rank $m$ and $\mathbf{0}_{1, m}$ denotes the null matrix of row size $1$ and column size $m$. Let's denote $S_{j,\,k}$ as the element of row $j$ and column $k$ in matrix $S$, we can express \eqnref{eq_SU} as 
\begin{equation}
\label{eq_SU_detail}
\sum_{j=1}^{m+1}S_{m+1,\,j}\times x_{j}^{p} = 0, \; \forall p \in [0, m-1].
\end{equation}

After the construction of shard testable codes, we will discuss two different scenarios based on the intermediate messages: 
\begin{enumerate}
    \item These messages are all correct, which means the test group set is honest; 
    \item There is at least one perturbed message, which means some of the group members are malicious.
\end{enumerate}

\textbf{The group members are all honest.}
If the test group does not include malicious nodes, with the group and the collected messages serving as $\mathbf{u}_{h}$ and $C_{h}$, respectively, we would get the following test result
\begin{equation}
\label{SC}
S \cdot C_{h}=S\cdot U\cdot B = 
\left[\frac{\mathbf{I}_{m}}{\mathbf{0}_{1,\,m}}\right] \cdot B,
\end{equation}
which means $S_{m+1} \cdot C=\mathbf{0}_{1,\,m}$ where $S_{m+1}$ is the row $m+1$ of matrix $S$. By using \eqnref{eq_SU} and (\ref{eq_SU_detail}), we can define the output of test row result as $\mathbf{O}=S_{m+1} \cdot C$.

We choose any $m$ rows of matrix $C$, denoted as $\overset{m}{C}$, and the corresponding encoding matrix is denoted as $\overset{m}{G}$. Therefore, we can get 
\begin{equation}
\label{eqn_C^m}
\overset{m}{C} = \overset{m}{G} \cdot B.
\end{equation}
Since $\overset{m}{G}$ is a Vandermonde matrix, we represent the inverse of $\overset{m}{G}$ as $(\overset{m}{G})^{-1}$. When we left multiply both sides of \eqnref{eqn_C^m} by the inverse of $\overset{m}{G}$, we would get
\begin{equation}
\label{eqn_recover_block}
(\overset{m}{G})^{-1} \cdot \overset{m}{C} = (\overset{m}{G})^{-1} \cdot \overset{m}{G} \cdot B = B,
\end{equation}
which means the newly-joined node could recover the original shard based on our coding scheme.

In the following scenario, we will consider a more complicated case where some malicious nodes will interfere with the shard data to keep the new node from joining the committee, which compromises the scalability of the permissioned blockchain.

\textbf{The group includes at least one malicious node.}
If there is at least one byzantine node in the group, denoting the group as $\mathbf{u}_{b}$ and the matrix including perturbed shard data as $C_{b}$, we would get the following result
\begin{equation}
\label{eq_SC_byzantine}
S \cdot C_{b}=
\left[\frac{\mathbf{I}_{m}}{\mathbf{e}_{1, m}}\right] \cdot B,
\end{equation}
where $\mathbf{e}_{1, m} \neq \mathbf{0}_{1, m}$, which indicates some of the elements in matrix $\mathbf{e}_{1, m}$ are not zero. We can express equation (\ref{eq_SC_byzantine}) as
\begin{equation}
\label{eq_SC_byzantine_detail}
\sum_{j=1}^{m+1}S_{m+1, j}\times x_{j}^{p} = 0 + b, \quad \forall p \in [0, m-1],
\end{equation}
where $b$ is the interfering data.
\begin{lemma}
\label{lemma_O}
The output of the test result in the last row with malicious group $\mathbf{u}_{b}$ is $\mathbf{O}\neq0$, and the output with honest group $\mathbf{u}_{h}$ is $\mathbf{O} = 0$.
\end{lemma}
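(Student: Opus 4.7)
The plan is to split into the two cases of the lemma and exploit the block structure $S\cdot U=\left[\frac{\mathbf{I}_{m}}{\mathbf{0}_{1,m}}\right]$ already established in equation (\ref{eq_SU}). By definition $\mathbf{O}=S_{m+1}\cdot C$, so the last row of $S\cdot C$ is exactly what the newly-joined node evaluates, and the two cases differ only in what $C$ looks like.

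For the honest case I would simply substitute $C_{h}=U\cdot B$ into $\mathbf{O}=S_{m+1}\cdot C_{h}$. Because $S_{m+1}\cdot U$ is the bottom row of $S\cdot U$, which equals $\mathbf{0}_{1,m}$ by (\ref{eq_SU}), we immediately get $\mathbf{O}=\mathbf{0}_{1,m}\cdot B=0$. This direction is essentially a restatement of equation (\ref{SC}) and needs no further work beyond pointing out the row selection.

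For the malicious case I would write $C_{b}=U\cdot B+E$, where $E$ is the perturbation matrix whose $i$-th row is nonzero iff node $i$ tampered with its encoded shard. By linearity, $\mathbf{O}=S_{m+1}\cdot U\cdot B+S_{m+1}\cdot E=S_{m+1}\cdot E$. To conclude $\mathbf{O}\neq 0$, I would invoke the closed-form expression for the inverse Vandermonde matrix: every entry of the last row $S_{m+1}$ has the form
\begin{equation*}
S_{m+1,j}=\pm\prod_{\substack{k=1\\k\neq j}}^{m+1}\frac{1}{x_{j}-x_{k}},
\end{equation*}
which is nonzero because \emph{CA} assigns pairwise distinct scalars $x_{i}$. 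Hence $S_{m+1}$ has no zero entry, and if exactly one node tampers then $S_{m+1}\cdot E=S_{m+1,i}\,E_{i}\neq 0$ is immediate.

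The main obstacle is the sub-case where several malicious nodes tamper simultaneously: in principle their rows of $E$ could be chosen so that $S_{m+1}\cdot E$ cancels to zero. I would address this by observing that $S_{m+1}$ is determined by the secret scalars $\{x_{i}\}$ distributed by \emph{CA}, which the adversaries do not collectively know in the threat model of Section~\ref{section:SystemModel}; consequently the event that adversarially chosen perturbations lie in the one-dimensional kernel defined by $S_{m+1}$ can be made vanishingly unlikely, so effectively $\mathbf{O}\neq 0$ whenever $E\neq 0$. This probabilistic/adversarial-ignorance step is the only part of the argument that is not a direct consequence of the algebraic identities already in the paper, and it is where the most care will be needed in the write-up.
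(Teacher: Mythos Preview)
Your algebraic decomposition is identical to the paper's: both reduce $\mathbf{O}$ to $S_{m+1}\cdot E=\sum_{j}S_{m+1,j}\,b_{j}$ by exploiting $S_{m+1}\cdot U=\mathbf{0}_{1,m}$, and the honest case is handled the same way. The only substantive divergence is in how you rule out cancellation when several rows of $E$ are nonzero.

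The paper does \emph{not} invoke adversarial ignorance of the scalars. Instead it works over a finite field $\mathbb{F}_{q}$ and argues that, for fixed nonzero perturbations, the event $\sum_{j}S_{m+1,j}\,b_{j}=0$ has probability at most $1/q$, which vanishes as $q$ grows. Your route---arguing that the adversaries cannot aim at the kernel of $S_{m+1}$ because they do not collectively know the $x_{i}$---is not supported by the threat model in Section~\ref{section:SystemModel}: the identity-proof protocol (Algorithm~\ref{algo_veri_scalar}, Phase~2) has every member transmit its scalar $x_{i}$ in the clear to the newly-joined node, so colluding malicious members can certainly pool their scalars. Thus the secrecy premise you rely on does not hold in this system, and you should replace that step with the paper's finite-field probability bound (or some equivalent randomized-scalar argument) rather than an ignorance assumption.
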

\begin{proof}
    Based on our construction, the core of our test computation is the last element, denoted as $\mathbf{O}=S_{m+1} \cdot C$. If the group is a malicious group $\mathbf{u}_{b}$, then the result is
    \begin{eqnarray}
        \mathbf{O} &=&S_{m+1} \cdot C_{b} \nonumber \\
        &=& \sum_{j=1}^{m+1}S_{m+1,\,j}\cdot C_{j} \nonumber  \\
        &=& \sum_{j=1}^{m+1}S_{m+1,\,j} \left(\sum_{k=0}^{m-1}x_{j}^{k}B_{k+1} + b_{j}\right)  \nonumber \\
        &=& \sum_{j=1}^{m+1}\left(\sum_{k=0}^{m-1}{S_{m+1,\,j}x_{j}^{k}B_{k+1}} \right) + \sum_{j=1}^{m+1}S_{m+1,\,j} b_{j} \label{proof_O_term_e} \\
        &=&  \sum_{j=1}^{m+1}S_{m+1,\,j} b_{j}. \quad (\forall k \in [0, m-1]).
    \end{eqnarray}
    However, it is possible that the term $\sum_{j=1}^{m+1}S_{m+1,\,j} b_{j}$ in (\ref{proof_O_term_e}) could be zero even if the group has malicious nodes. Under this situation and over a finite field $\mathbb{F}_{q}$, the probability of this exceptional case to occur is at most $1/q$. Thus, the probability approaches zero as the field size $q$ increases.
    
    By contrast, if the group members are all honest, the term $\sum_{j=1}^{m+1}S_{m+1,\,j} b_{j}$ in (\ref{proof_O_term_e}) will not exist. Therefore, the result with a honest group $\mathbf{u}_{h}$ will be $\mathbf{O}=S_{m+1} \cdot C_{h}=0$.
\end{proof}

\subsection{Identity Proof Protocol}
We develop an identity proof protocol using the digital signature to ensure the integrity of the node's scalar. What's more, it can be used as evidence if a node perturbs or forges data, in which case the new node cannot compute the inverse of the Vandermonde matrix causing the failure to recover the original shard data.

Digital signature technology is used to encrypt the digest of a message with the sender's private key and transmit it to the receiver along with the plain message. The receiver can only decrypt the encrypted digest with the sender's public key to get $d_{new}$, and then use the hash function to generate a digest $d_{ori}$ for the received plain text. If $d_{new}$ and $d_{ori}$ are the same, it means that the plain message received is complete and not modified during the transmission, otherwise the message is tampered with. Thus the digital signature can be used to verify the integrity of the information.

Inspired by \cite{kaur2012digital} and based on the characteristic of known public keys in permissioned blockchain, our identity proof protocol using digital signature contains the following components:
\begin{itemize}
    \item The message $\mathcal{M}$, which is the content to which the signature algorithm may be applied. The content in our protocol is "node's scalar $x_{i}$ where $i$ is the index of the node;
    \item A key generation algorithm $G$, which is used by the central authority to generate credentials for each node;
    \item A signature algorithm $\sigma$, which produces a signature $\sigma (\mathcal{M}, sk_{i})$ for a message $\mathcal{M}$ using the secret key $sk_{i}$;
    \item A verification algorithm $\mathcal{V}$, which tests whether $\sigma (\mathcal{M}, sk_{i})$ is a valid signature for message $\mathcal{M}$ using the corresponding public key $pk_{i}$. In other words, $\mathcal{V}(\sigma, \mathcal{M}, pk_{i})$ will be $\mathrm{true}$ if and only if it is valid.
\end{itemize}

Since \textit{PKI} exists among nodes, each node $i$ could create a digital signature $\sigma (\mathcal{M}, sk_{i})$ on message $\mathcal{M}$ with its secret key $sk_{i}$. And the signature can be verified by the corresponding public key $pk_{i}$ which is known to each node in the committee.

\textbf{Cryptographic Primitives} First, we present some primitives that we use in the rest of the paper.
\begin{itemize}
    \item \hash{msg} - a cryptographically secure hash function that returns the digest of $msg$ (e.g., SHA-256, SHA-512);
    \item \encrypt{hash, sk} - an encrypted hash function that returns the encrypted hash value (or called signature) for a hash value $hash$ using a secret key $sk$ ;
    \item \decrypt{sig, pk} - a decrypted hash function that returns the hash value of signature result using corresponding public key $pk$.
\end{itemize}

\textbf{Signature verification}
At the initialization phase of each committee, \textit{CA} will send the plain message $\mathcal{M}$ (which is scalar $x_{i}$) and signature $sig_{C}^{i} = \sigma (\mathcal{M}, sk_{C})$ to each committee member, where $i$ is the index of the target node and $sk_{C}$ is the secret key of \textit{CA}. Since the setting of permissioned blockchain where member knows the public key of each other, the member could use \textit{CA}'s public key $pk_{C}$ to verify if $sig_{C}^{i}$ is valid using $\mathcal{V}(sig_{C}^{i}, \mathcal{M}, pk_{C})$.
More details are shown in Algorithm \ref{algo_veri_scalar}. The necessity for a two-step comparison can be attributed to two distinct purposes. First, the first signature guarantees that scalars from other nodes are allocated by \textit{CA}. Next, the second signature is critical in confirming that the first signature is sent from the intermediary node. An erroneous second signature allows $\rm{N}$$_{new}$ to forward the message to \textit{CA}, offering ``evidence'' to reveal any suspicious behavior, as the private key necessary for signing the second message is unique to the intermediary node.

\subsection{Adaptive Group Testing Method}

\label{section:adaptive_group_testing}
Group testing originated from World War II for testing blood supplies, Robert Dorfman reduced the number of tests detecting whether the US military draftees had syphilis dramatically by pooling samples\cite{dorfman1943detection}.

There are two types of group testing methods for identifying members with defects in a group: non-adaptive group testing (NAGT) and adaptive group testing (AGT). NAGT involves pooling samples from multiple individuals and testing them all together as a group, while AGT involves designing the test pools sequentially and adjusting the groups based on previous test results to minimize the number of individual tests needed. 
We adopt the AGT method to identify potential malicious nodes since NAGT requires a large number of tests, which is time-consuming, and the validation of transactions in the blockchain is very sensitive to time. 

\begin{algorithm}[!htb]
    \caption{Identity proof protocol with digital signature} %
    \label{algo_veri_scalar} %
    \SetKwInput{PhaseOne}{$\triangleright$ Phase 1}
    \SetKwInput{PhaseTwo}{$\triangleright$ Phase 2}
    \SetKwInput{PhaseThree}{$\triangleright$ Phase 3}
    \SetKwBlock{AsLeader}{As a leader $\rm{N}$$_{leader}$}{}
    \SetKwBlock{AsNode}{As a node $\rm{N}$$_{i}$}{}
    \SetKwBlock{AsOtherNode}{As other node $\rm{N}$$_{i}$ of the committee}{}
    \SetKwBlock{AsCA}{As $\textbf{\textit{CA}}$ of the system}{}
    \PhaseOne{Initialization of nodes}
    \AsCA{
        \ForEach{\rm{node} $\rm{N}$$_{i}$ \rm{in committee} $\mathcal{C}$}{
            $x_{i} \leftarrow$ generate a random scalar over $\mathbb{F}_{q}$\;
            $sig_{C}^{i} \leftarrow$ $\sigma (x_{i}, sk_{C})$\tcp*{encrypt the scalar $x_{i}$ with $sk_{C}$} 
            send [$x_{i}$, $sig_{C}^{i}$] to node $\rm{N}_{i}$\;
        } 
    }
    \AsOtherNode{
        wait for message [$x_{i}$, $sig_{C}^{i}$] from $\textbf{\textit{CA}}$\;
        $pk_{C} \leftarrow$ query public key of $\textbf{\textit{CA}}$\;
        \eIf{$\mathcal{V}(sig_{C}^{i}$, $x_{i}$, $pk_{C})$ == $\rm{true}$}{
            store the assigned scalar $x_{i}$\;
        }(False){
            request $\textbf{\textit{CA}}$ to resend the message\; %
        }
    }
    \PhaseTwo{New node $\rm{N}$$_{new}$ joins the committee $\mathcal{C}$}
    \AsCA{
        $x_{new} \leftarrow$ generate a new random scalar over $\mathbb{F}_{q}$\;
            $sig_{C}^{new} \leftarrow$ $\sigma (x_{new}, sk_{C})$\; 
            send [$x_{new}$, $sig_{C}^{new}$] to node $\rm{N}$$_{new}$\;
    }
    \AsOtherNode{
        $sig_{i}^{new}\leftarrow$ $\sigma (sig_{C}^{i}, sk_{i})$\tcp*{new signature for $sig_{C}^{i}$ with $sk_{i}$}
        send [$x_{i}$, $sig_{C}^{i}$, $sig_{i}^{new}$] to node $\rm{N}_{new}$\;
    }    
    \PhaseThree{Signature verification by newly-joined node $\rm{N}$$_{new}$}
    wait for messages from nodes of the committee $\mathcal{C}$\;
    \ForEach{\rm{message} [$x_{i}$, $sig_{C}^{i}$, $sig_{i}^{new}$] from node $\rm{N}_{i}$ }{
        $pk_{i} \leftarrow$ query public key of $\rm{N}$$_{i}$ from $\textbf{\textit{CA}}$\;
        $pk_{C} \leftarrow$ query public key of $\rm{N}$$_{leader}$ from $\textbf{\textit{CA}}$\;
        \uIf{
        \Fverify{$sig_{i}^{new}$, $sig_{C}^{i}$, $pk_{i}$} $\wedge$ \Fverify{$sig_{C}^{i}$, $x_{i}$, $pk_{C}$} }{
            store the scalar $x_{i}$ locally for decoding operation\;
            }\uElseIf(\tcp*[f]{the second signature is invalid, there may be some error during transmission}){$\neg$(\Fverify{$sig_{i}^{new}$, $sig_{C}^{i}$, $pk_{i}$})}{
            require $\rm{N}$$_{i}$ to resend the message\;
            }\Else(\tcp*[h]{the initial signature is inconsistent with $x_{i}$}){
            send [$x_{i}$, $sig_{C}^{i}$, $sig_{i}^{new}$] to \textbf{\textit{CA}}\tcp*{make it as fraud-proof to punish the sender $\rm{N}$$_{i}$}
            }
    }
\end{algorithm}

In our design, we assume each node is unidentified unless there is an identity proof to ensure its credibility, which means these nodes will be in at least one negative test. The goal is to build a testing set with as few tests $T$ as possible to identify all malicious nodes. In particular, each test $\mathbf{u}$ consists of a group of nodes, and the result is false if data messages transmitted are tampered with. Otherwise, it is correct.
Based on our settings, the group testing problem can be formulated as follows
\begin{equation}
\label{GT_formalization}
\mathbf{y}=\mathbf{M} \circ \mathbf{x},
\end{equation}
where (1) $\circ$ denotes the row-wise Boolean operation, and the result is $y_{i} = 0$ if nodes in the $i$-th test are all honest or $y_{i} \neq 0$ instead;
(2) $\mathbf{M} \in \mathbb{F}_{2}^{t\times n}$ is a contact matrix where $\mathbf{M}_{i, j} = 1$ indicates the $i$-th test contains node $\mathrm{N}$$_{j}$.

The goal of group testing is to design a test matrix $\mathbf{M}$ such that the number of tests is as small as possible and it can be expressed as $\mathbf{M}=[\mathbf{u_{1}}\;\mathbf{u_{2}}\;\cdots \; \mathbf{u_{n}}]^{\top}$. 
Items included in tests with negative outcomes will be viewed as noninfective and collected into an honest set $\mathcal{H}$. Similarly, those items in positive tests (validation result is wrong) will be collected into the malicious set $\mathcal{S}$. Owing to the property of $k$-disjunct, each sample includes a different set of tests. By matching the honest set $\mathcal{H}$ and malicious set $\mathcal{S}$, the $f$ defectives can be identified.

The selection of nodes in a test to identify malicious nodes with shard testable codes can be regarded as a group testing problem. Thus the $\circ$ operation in \eqnref{GT_formalization} can be expressed as the matrix multiplication operation of shard messages verification in \eqnref{eq_SU}. And our goal is to minimize the number of tests for identifying malicious nodes. In the general AGT problem, the algorithm designs a set of tests \{$u_{1}$, $u_{2}$, $\cdots$, $u_{T}$\} to make $T$ as small as possible. Given the outcomes of these tests, the honest set and malicious set will be generated, achieving the goal of malicious node identification.

In a simple method, \textit{CA} can fix $m$ honest nodes and check one unidentified node whether honest or not if \textit{CA} knows $m+1$ honest nodes with trials before. Thus the remaining $N-m-1$ nodes can be identified in the same way. Dorfman\cite{dorfman1943detection} proposes a simple procedure which partitions the whole $E$ items containing $f$ defectives into $\sqrt{E f}$ subsets, each of size $\sqrt{E/f}$. Hence, the number of tests that Dorfman's procedure requires is at most
\begin{equation}
    T=\sqrt{Ef} + f\sqrt{\frac{E}{f}} = 2\sqrt{Ef}.
\end{equation}
Following Dorfman's procedure, the key is the number of trials $\widehat{T}$ that finds the first honest group whose result of the last row in \eqnref{eq_SU} is $\mathbf{O} = 0$. 

\begin{theorem}
\label{theorem:prob_no_malicious}
Given a committee of $n$ nodes, if there are $f$ malicious nodes where $n \geq m + f + 1$, the probability of having no malicious nodes in a test group of size $m+1$ is
\begin{equation}
\label{prob_no_malicious}  %
P(H=0) = \prod\limits_{i=0}\limits^{m} (1-\frac{f}{n-i}),
\end{equation}
where $H$ is the number of malicious nodes.
\end{theorem}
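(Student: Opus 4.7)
\medskip

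\noindent\textbf{Proof proposal.} The plan is to view the construction of a test group of size $m+1$ as uniform sampling without replacement from the committee of $n$ nodes, of which exactly $f$ are malicious and $n-f$ are honest. The event $\{H=0\}$ then coincides with the event that all $m+1$ sampled members fall into the honest subpopulation, so the task reduces to computing a standard hypergeometric probability.

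I would proceed sequentially: label the chosen nodes in the order they are drawn and let $E_i$ denote the event that the $i$-th draw yields an honest node, for $i=0,1,\dots,m$. Given that the first $i$ draws $E_0,\dots,E_{i-1}$ have all produced honest nodes, there remain $n-i$ candidates in the committee, of which $f$ are still malicious and $n-i-f$ are still honest (note that $n \geq m+f+1$ guarantees $n-i-f \geq 1$ for every $i \leq m$, so conditioning is well-defined). Hence
\begin{equation*}
P\!\left(E_i \,\middle|\, E_0 \cap \cdots \cap E_{i-1}\right) \;=\; \frac{n-i-f}{n-i} \;=\; 1 - \frac{f}{n-i}.
\end{equation*}

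Applying the chain rule for conditional probability to $\{H=0\} = \bigcap_{i=0}^{m} E_i$ and multiplying the above factors yields the claimed identity
\begin{equation*}
P(H=0) \;=\; \prod_{i=0}^{m} P\!\left(E_i \,\middle|\, E_0 \cap \cdots \cap E_{i-1}\right) \;=\; \prod_{i=0}^{m}\left(1 - \frac{f}{n-i}\right).
\end{equation*}
As a sanity check, one may equivalently derive the same expression by counting directly, $P(H=0) = \binom{n-f}{m+1}/\binom{n}{m+1}$, and verifying that the ratio telescopes into the stated product.

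There is no real obstacle here: the statement is essentially the definition of the hypergeometric distribution, and the only subtlety worth flagging is the hypothesis $n \geq m+f+1$, which is exactly what keeps every factor $1-f/(n-i)$ strictly positive and every conditional probability well-defined throughout the $m+1$ draws.
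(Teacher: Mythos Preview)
Your proposal is correct and arrives at exactly the same hypergeometric identity as the paper; the only cosmetic difference is that the paper leads with the direct count $P(H=0)=\binom{n-f}{m+1}/\binom{n}{m+1}$ and then simplifies to the product, whereas you lead with the sequential chain-rule argument and cite the binomial ratio as a check. Either ordering is fine, and your remark about the role of the hypothesis $n\geq m+f+1$ is a nice addition that the paper leaves implicit.
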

\begin{proof}
    We first compute the number of ways to pick $m+1$ chunks among the set of non-malicious nodes, \ie $\binom{n-f}{m+1}$. Similarly, we can produce the total number of ways to pick any $m+1$ samples out of the total number of samples, \ie $\binom{n}{m+1}$. Therefore, the probability can be computed as
    \begin{eqnarray}
    \label{eqn:P_H_0}
      P(H=0) &=& \frac{\binom{n-f}{m+1}}{\binom{n}{m+1}}  \nonumber   \\
      &=& \prod\limits_{i=0}\limits^{m} (1-\frac{f}{n-i}). \label{eqn_simple_form}
    \end{eqnarray}
\end{proof}

If we define \textit{Malice Ratio} ($R_{f}$) to represent the proportion of malicious nodes to the total number of nodes ($f/n$), then we can rewrite Equation \eqref{eqn_simple_form} as:
\begin{equation}\label{eqn_ratio_form}
P(H=0) = \prod_{i=0}^{m} \left(1 - \frac{R_{f}\times n}{n-i}\right).
\end{equation}
After thorough analysis and computation, we have observed that regardless of the total number of nodes within a committee, when the \textit{Malice Ratio} does not exceed $1/5$ and the number of encoding shards is limited to $2$ or fewer, there is at least a $50\%$ probability that all members of the testing set are honest nodes. Under such circumstances, system stability and security can be guaranteed.

Accordingly, the probability of failing to find the first honest group with $\widehat{T}$ trials is $(1-P(H=0))^{\widehat{T}}$. Assume there is a error probability $\rho$, and we want to make $(1-P(H=0))^{\widehat{T}} \leq \rho$. Since $0<(1-P(H=0))\leq 1$, $0<T\leq \log_{1-P(H=0)}{\rho}$. Therefore, the maximum number of trials to find the first non-malicious group with $\rho$ is 
\begin{equation}
    \widehat{T}=\log_{1-P(H=0)}{\rho}.
\end{equation}

\begin{theorem}
For malicious node identification in committees of sharding permissioned blockchain, adaptive group testing with shard testable codes can identify all malicious nodes by $T$ testing trials, where $T$ can be written as
    \begin{equation}
        T \leq \log_{1-P(H=0)}{\rho} + 2\sqrt{(n-m-1)f},
    \end{equation}
    where $P(H=0)$ is given in \eqnref{prob_no_malicious} of Lemma \ref{theorem:prob_no_malicious}.
\end{theorem}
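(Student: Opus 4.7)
The plan is to prove the bound by decomposing the identification procedure into two sequential stages and summing the test counts of each stage, matching the two terms on the right-hand side.

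First, I would bound the number of tests $\widehat{T}$ needed to locate an initial ``seed'' of $m+1$ mutually-honest nodes. By Theorem \ref{theorem:prob_no_malicious}, any random test group of size $m+1$ is entirely honest with probability $P(H=0)$, and Lemma \ref{lemma_O} tells us we can detect whether a group is honest from the single scalar output $\mathbf{O}$ (with error probability at most $1/q$, which I treat as absorbed into $\rho$ by choosing $q$ large). Successive trials being drawn independently, the probability that none of $\widehat{T}$ trials produces an honest group is at most $(1-P(H=0))^{\widehat{T}}$. Setting this tail below the prescribed error level $\rho$ and solving for $\widehat{T}$ via logarithm base $1-P(H=0)$ (well-defined since $0<1-P(H=0)<1$) yields $\widehat{T} \leq \log_{1-P(H=0)}\rho$, accounting for the first term.

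Second, once the seed of $m+1$ honest nodes is in hand, the remaining $n-m-1$ nodes contain all $f$ malicious nodes. I would invoke Dorfman's pooling procedure verbatim: partition those $E := n-m-1$ remaining nodes into $\sqrt{E f}$ pools of size $\sqrt{E/f}$. Each pool is tested by forming a group of $m+1$ consisting of one candidate from the pool (actually, the entire pool padded out with seed nodes acting as verifiers) and evaluating $\mathbf{O}$ via \eqnref{eq_SU_detail}; because $m$ seed members are guaranteed honest, the test output is $0$ iff all pool members are honest, exactly the Boolean-OR semantics assumed by Dorfman in \eqnref{GT_formalization}. Negative pools are certified honest in one test, and each positive pool triggers individual retesting of its $\sqrt{E/f}$ members against the seed. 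Summing the pool tests and the individual follow-ups gives at most $\sqrt{Ef} + f\sqrt{E/f} = 2\sqrt{(n-m-1)f}$ tests, matching the second term.

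Adding the bounds from the two stages yields the claimed inequality. The main obstacle I expect is the bookkeeping in Stage 2: one must verify that the shard-testable-code machinery really does implement the Boolean-OR pool test required by Dorfman, i.e., that embedding a pool of candidates alongside seed verifiers in a single $(m+1)$-sized group still produces $\mathbf{O}=0$ precisely when every candidate is honest. This reduces, via \eqnref{eq_SU} and the argument in Lemma \ref{lemma_O}, to showing that the perturbation vector $\sum_{j} S_{m+1,j} b_j$ vanishes iff all $b_j=0$, which holds except on a set of measure $1/q$ per test that I would absorb into $\rho$ by enlarging the field and, if needed, applying a union bound over the at most $2\sqrt{Ef}+\widehat{T}$ trials. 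The remaining ingredients, namely Dorfman's optimal partition size and the logarithmic tail inversion, are standard and require only routine calculation.
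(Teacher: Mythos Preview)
Your proposal is correct and mirrors the paper's own proof: a two-stage decomposition where Stage~1 bounds $\widehat{T}$ via the tail $(1-P(H=0))^{\widehat{T}}\le\rho$ and Stage~2 applies Dorfman's procedure to the remaining $n-m-1$ nodes to obtain the $2\sqrt{(n-m-1)f}$ term. Your treatment is in fact more careful than the paper's in flagging the bookkeeping issue of how a multi-node pool is embedded in an $(m{+}1)$-sized test and in absorbing the $1/q$ false-negative event into $\rho$, points the paper's proof glosses over.
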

\begin{proof}
    If \textit{CA} finds the first honest group of size $m+1$ in a committee by $\widehat{T}$ trials. Based on Dorfman's procedure, CA divides the remaining $n-m-1$ unidentified nodes into $\sqrt{(n-m-1)f}$ subgroups, each of size $\sqrt{(n-m-1)/f}$. For a newly-joined node, CA requires one of the subgroups to send shard messages to it so that the new node can perform verification and send feedback to \textit{CA}. With the results of group testing, \textit{CA} could test those suspicious nodes separately and identify them as malicious if the outcome is wrong. 

    Hence, by using Dorfman's procedure, the total number of group testing trials is at most 
    \begin{eqnarray}
        T &\leq& \widehat{T} + \sqrt{(n-m-1)f} + f\sqrt{\frac{n-m-1}{f}} \nonumber \\
        &=& \log_{1-P(H=0)}{\rho} + 2\sqrt{(n-m-1)f}. \label{dorfman_term}
    \end{eqnarray}
\end{proof}

\begin{remark}
The second term in \eqnref{dorfman_term} is from the original adaptive testing method\cite{dorfman1943detection}. It can be improved by other well-design schemes, \eg HGBSA\cite{hwang1972method}. But the core idea is similar. For simplicity of demonstration, we will not discuss the variant of adaptive group testing algorithms in this paper.
\end{remark}

\subsection{Cost and Complexity}
\textbf{Communication cost}
Following our RecAGT scheme, the newly-joined node has three parts of communication costs: (1) the initialization identity proof from \textit{CA}; (2) $m+1$ encoded shard messages from the assigned group testing members, and (3) $m+1$ identity proof messages consisting of the scalar $x_{i}$, the first signature from \textit{CA} and the second signature from the member. Thus the communication cost for a newly-joined node can be computed as
\begin{equation}
    \label{eqn:RecAGT_comu_cost}
    (w+z+s) + ((m+1)\times \frac{b}{m}) + ((m+1)\times (w+2z)) = O(b),
\end{equation}
where $w$ is the size of scalar, $z$ is the size of digital signature, $s$ is the size of secret key, $m$ is the size of testable codes and $b$ is the size of original shard.

\textbf{Computational complexity} At the core of computational complexity is the decoding complexity, \ie computing the inverse of the Vandermonde matrix. Therefore the computational complexity is $O(\log^2(m) \log\log(m))$.

\section{Experiments}
\label{section:experiments}
In this section, we conduct extensive experiments to evaluate the performance of our proposed identification scheme under different parameters. We also compare our identification scheme with others.

\subsection{Setup for Parameters}

\begin{table}[H]
    \vspace{-2.0em}
    \centering
    \caption{Different settings with respect to $P$, $n$, $f/n$ and m}
    \begin{tabular}{c|c|c|c|c}
    \hline
        \textbf{} & \tabincell{c}{\textbf{\# Committees} \\ \textbf{($P$)}} & \tabincell{c}{\textbf{\# each} \\ \textbf{committee($n$)}} & \tabincell{c}{\textbf{malicious ratio} \\ \textbf{($f/n$)}} & \tabincell{c}{\textbf{\# shard-coding} \\ \textbf{size ($m$)}}\\ \hline
        Setting 1 & 300 & 6 & 0.2 (1) & 2 \\ \hline
        Setting 2 & 70 & 24 & 0.125 (3) & 3 \\ \hline
        Setting 3 & 25 & 72 & 0.05 (4) & 8 \\ \hline
        Setting 4 & 4 & 450 & 0.01 (5) & 10 \\ \hline
    \end{tabular}
    \label{table:configuration}
    \vspace{-2.0em}
\end{table}
To simulate a more practical and realistic permissioned blockchain, we adopt a similar experimental configuration referring to PShard\cite{gao2022pshard} and Omniledger\cite{kokoris2018omniledger}. There are four settings under a network of $1800$ nodes, as shown in Table~\ref{table:configuration}. Specifically, the configuration of setting 3 means that there are $25$ committees of each $72$ members, and the adversary ratio is $0.05$ ($0.05\times 72\approx4$).

\subsection{Simulation Results and Analyses}

\begin{figure}[htbp]
    \centering  %
    \vspace{-0.35cm} %
    \subfigtopskip=2pt %
    \subfigbottomskip=2pt %
    \subfigcapskip=-5pt %
    \subfigure[Probability $P(H=0)$ with changing $m$]{   %
        \begin{minipage}[t]{0.45\linewidth}
        \includegraphics[width=\linewidth]{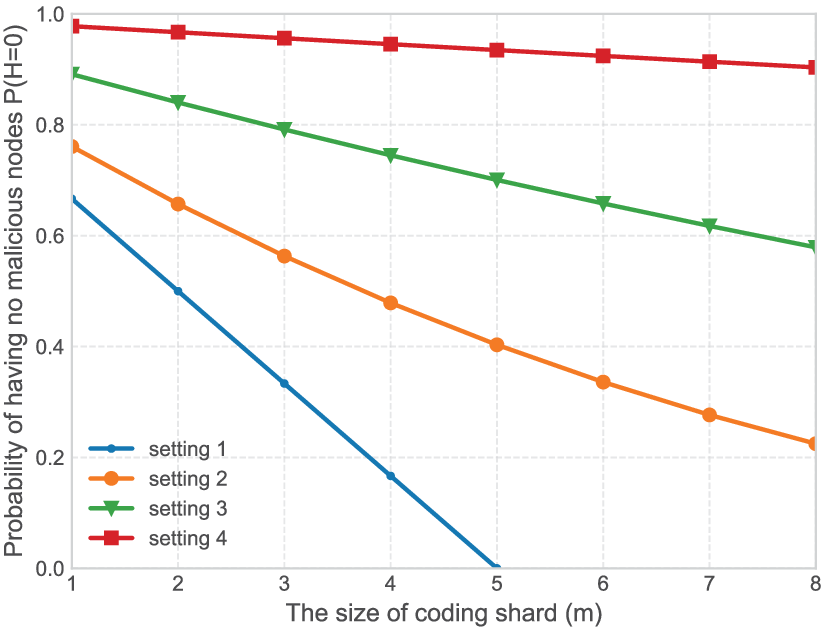}
        \label{fig:p_H:changing_m}
        \end{minipage}
    }\quad  %
    \subfigure[Probability $P(H=0)$ with different ratio $m/n$]{ %
        \begin{minipage}[t]{0.45\linewidth}
        \includegraphics[width=\linewidth]{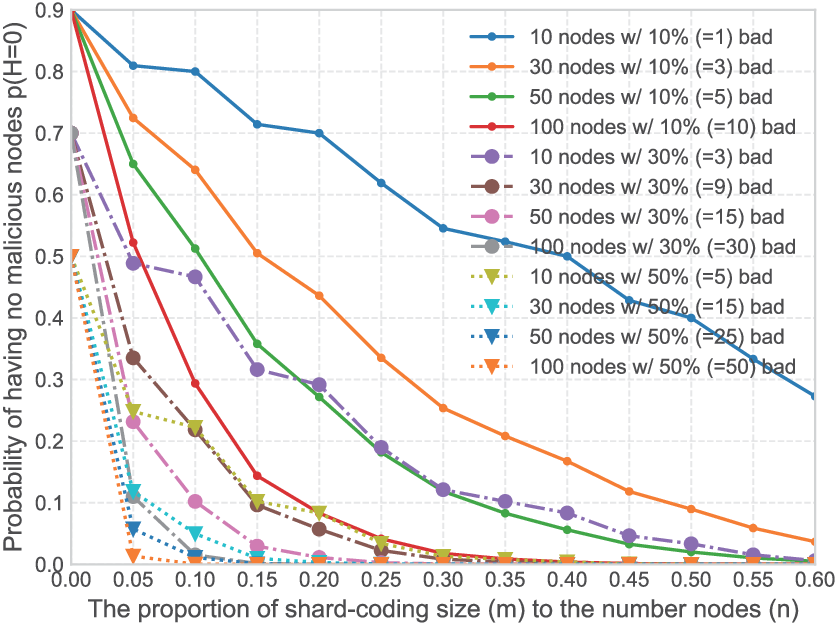} 
        \label{fig:p_H:changing_ratio}
        \end{minipage}
    }
    \caption{The influence of different settings about $m$ on the probability of having no malicious nodes in group testing $P(H=0)$.}    %
    \label{fig:p_H}    %
    \vspace{-1em}
\end{figure}

The key parameter is the shard-coding size. A larger $m$ means more divided sub-shards and thus larger members in each group. But it will also increase the computational complexity of encoded matrix construction and other computation. Therefore, we run simulations of the theoretical results of Theorem \ref{theorem:prob_no_malicious}, which is shown in Fig.~\ref{fig:p_H}. 
It is apparent from Fig.~\ref{fig:p_H:changing_m} that the successful probability decreases with the increasing of the shard-coding size.
The reason why the blue line approaches $0$ is that $n \geq (m+f+1)$ which can be derived easily from \eqnref{eqn:P_H_0}. For a reasonable trade-off between group size and computational complexity, the choice of $m$ for settings is shown in the last column of Table~\ref{table:configuration}.

Next, we investigate the impact of varying ratios ($f/n$ and $m/n$) on $P(H=0)$. The results are depicted in Fig.~\ref{fig:p_H:changing_ratio}. There is a gradual decline in the probability as $m/n$ increases. Moreover, the probability experiences a significant drop as the value of $f/n$ increases.
The result of \eqnref{eqn:P_H_0} agrees with our guess where $m$ represents the number of terms, $f$ is the numerator, and $n$ is the denominator. Since the value of each term is in the range of [0: 1], the result of cumulative multiplication will get smaller if each of them increases.

 \begin{table}[!htbp]
    \centering
    \caption{Experiment parameter settings}
    \begin{tabular}{c|c|c}
    \hline
        \textbf{Parameter} & \textbf{Value} & \textbf{Notes}  \\ \hline
        checksum value $d$ (Bytes, B) & 16 & MD5 algorithm is used here  \\ \hline
        secret key $s$ (B) & 128 & 1024 bits in general  \\ \hline
        scalar $w$ (B) & 1 &   \\ \hline
        digital signature $z$ (B) & 256 &   \\ \hline
        committee size $n$ & 1, 50, 100  & \\ \hline
        shard-coding size $m$ & $0.1n$ & \tabincell{c}{For simplicity, the ratio of \\ $m/n$ is chosen by 0.1}\\ \hline
    \end{tabular}
    \label{table:experiment_params}
\end{table}

Additionally, Fig.~\ref{fig:trials} shows the number of required group testing trials with different system parameters specified in Table~\ref{table:configuration} and we set $\rho=0.01$ empirically. 

\begin{figure}[!htb]
    \centering  %
    \subfigtopskip=2pt %
    \subfigbottomskip=2pt %
    \subfigcapskip=-5pt %
    \subfigure[Number of trials with changing $f$]{   %
        \begin{minipage}[t]{0.45\linewidth}
        \includegraphics[width=\linewidth]{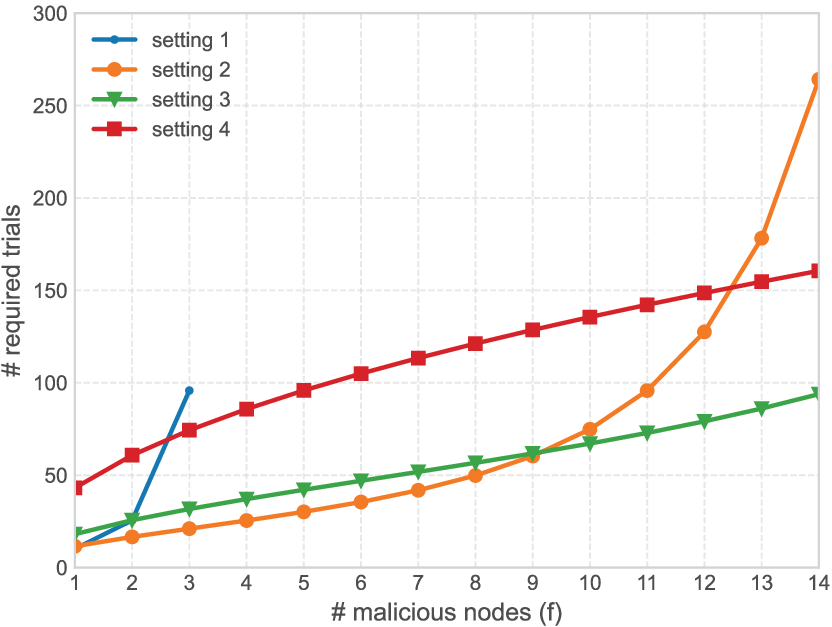}
        \label{fig:trials:a}
        \end{minipage}
    } \quad %
    \subfigure[Number of trials with changing $n$]{ %
        \begin{minipage}[t]{0.45\linewidth}
        \includegraphics[width=\linewidth]{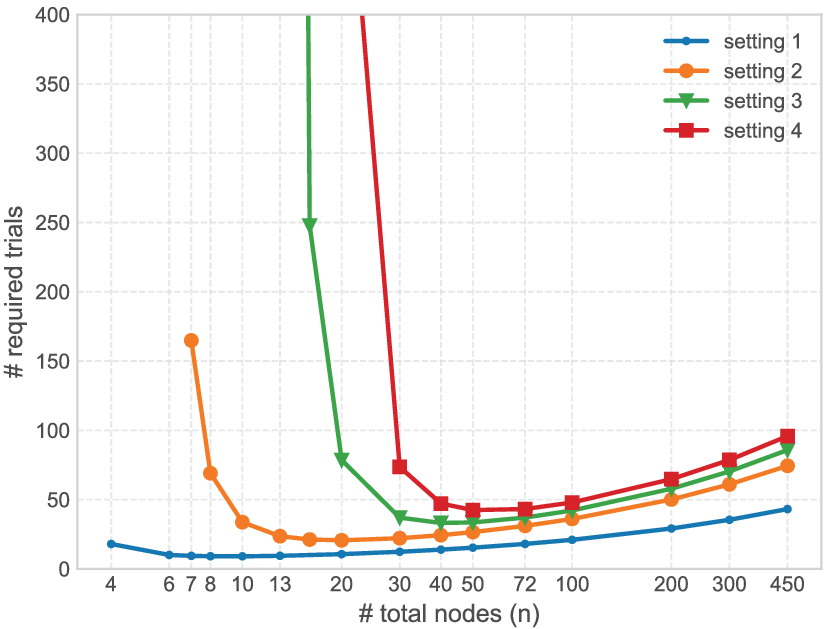} 
        \label{fig:trials:b}
        \end{minipage}
    }
    \caption{The influence of different parameters on the number of required group testing trials $T$.}    %
    \label{fig:trials}    %
    \vspace{-1.0em}
\end{figure}

In Fig.~\ref{fig:trials:a}, we show the number of trials when $f$ varies from $1$ to $14$ in different settings. It is obvious that the number of trials to identify all malicious nodes increases along with $f$. The reason why the blue line stops at $f=3$ is because $f\leq n-m-1$. 
Lines of setting 2, 3, and 4 in Fig.~\ref{fig:trials:b} do not begin at $n = 6$ because of the restriction of $n \geq (m+f+1)$. All lines begin at a high point and then show a trend from decline to rise because the ratio of $(m+f)/n$ approaches $1$ at the beginning, then $T$ reaches the minimum when the ratio approaches $1/4$.

In Fig.~\ref{fig:cost_complexity}, we numerically evaluate communication costs and computational decoding speed of RecAGT, and compare them with the other two methods. Referring to some deployments of permissioned blockchain, the specific parameters in our experiments are summarized in Table~\ref{table:experiment_params}.

\begin{figure}[htbp]
    \centering  %
    \subfigtopskip=2pt %
    \subfigbottomskip=2pt %
    \subfigcapskip=-5pt %
    \subfigure[Communication cost]{   %
        \begin{minipage}[t]{0.45\linewidth}
        \includegraphics[width=\linewidth]{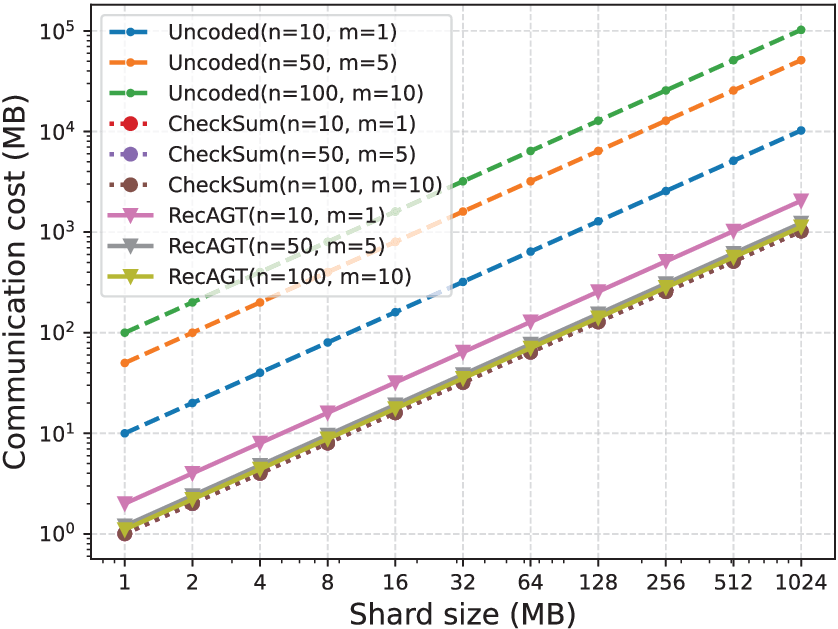}
        \label{fig:cost_complexity:cost}
        \end{minipage}
    }\quad  %
    \subfigure[Computational decoding speed (s)]{ %
        \begin{minipage}[t]{0.45\linewidth}
        \includegraphics[width=\linewidth]{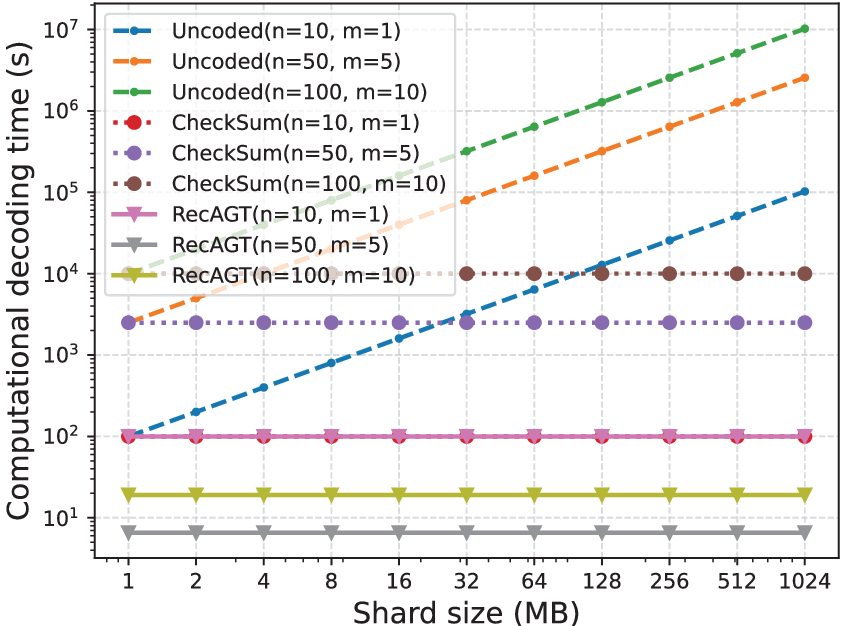} 
        \label{fig:cost_complexity:speed}
        \end{minipage}
    }
    \caption{Comparison of Uncoded, CheckSum, and RecAGT for communication cost and computational decoding speed.}    %
    \label{fig:cost_complexity}    %
\end{figure}

\textbf{Communication cost.} From Fig.~\ref{fig:cost_complexity:cost}, it becomes evident that the Uncoded scheme incurs a substantially higher communication cost than the others. This discrepancy arises from requiring all members to transmit the original shard data to the new node. It is worth mentioning that when $n$ is small, the communication cost of the RecAGT exceeds that of the CheckSum. But with the increasing $n$, the difference between them is notably small. The reason is that the CheckSum scheme needs to send the original data and checksum values of the rest, which is $(b+d\times n)$. And the decisive term in RecAGT is $[(m+1)\times b/m]$ that is explained in \eqnref{eqn:RecAGT_comu_cost}. As $n$ and $m$ increase, other factors could be ignored, so these two schemes achieve almost the same cost. However, compared with the CheckSum scheme, our RecAGT scheme encodes the original shard data and keeps the internal data of each organization confidential.

\textbf{Computational decoding speed.} In Fig.~\ref{fig:cost_complexity:speed}, we observe a significant quadratic increase in computing time for the Uncoded scheme. To illustrate, when the shard size reaches $256$ MB, the computation time exceeds an unacceptable $10,000$ s. The lines of CheckSum and RecAGT appear constant because they do not directly manipulate shard data. CheckSum uses check codes to validate the consistency of data received from other members. The RecAGT scheme uses the inverse of the Vandermonde matrix to identify perturbed data. Our adoption of the polynomial interpolation method demonstrates superior speed and efficiency compared to the other two schemes, as evidenced by our experimental results.

\section{Conclusion}
\label{section:conclusion}
In this paper, we propose RecAGT scheme for the identification of potential malicious nodes, which focuses on reducing communication overhead and identifying potential malicious nodes. Specifically, we design the shard testable codes to encode original data. And we come up with an identity proof using the digital signature and choose an adaptive group testing method to make the required number of trials as small as possible. The simulation results demonstrate that our proposed RecAGT scheme can efficiently identify malicious nodes and reduce communication and computational costs.
\subsubsection{Acknowledgements}
This work was supported in part by the National Natural Science Foundation of China (62072321, 61972272), the Six Talent Peak Project of Jiangsu Province (XYDXX-084), the China Postdoctoral Science Foundation (2020M671597), the Jiangsu Postdoctoral Research Foundation (2020Z100), Suzhou Planning Project of Science and Technology (SS202023), the Future Network Scientific Research Fund Project (FNSRFP-2021-YB-38), Natural Science Foundation of the Higher Education Institutions of Jiangsu Province (22KJA520007, 20KJB520002), the Collaborative Innovation Center of Novel Software Technology and Industrialization, and Soochow University Interdisciplinary Research Project for Young Scholars in the Humanities.

\bibliographystyle{splncs04}
\bibliography{documents}

\end{document}